\documentclass{IEEEtran}
\usepackage{cite}
\usepackage{url}
\usepackage{amsmath,amssymb,amsfonts}
\usepackage{multirow}
\usepackage{amsthm}

\usepackage{graphicx}

\usepackage{textcomp}
\usepackage{epsfig}
\usepackage{caption}
\usepackage{graphicx}
\usepackage{epstopdf}
\usepackage{amsmath,amssymb}
\usepackage[framemethod=TikZ]{mdframed}

\newmdenv[
  topline=true, bottomline=true,
  leftline=false, rightline=false,
  linewidth=0.6pt,
  innerleftmargin=0pt, innerrightmargin=0pt,
  innertopmargin=4pt, innerbottommargin=4pt
]{AlgRuleBox}

\usepackage{epstopdf}
\usepackage[ruled]{algorithm}
\usepackage{algorithmic}
\usepackage{enumerate}
\usepackage{float}
\usepackage{graphics,graphicx}
\usepackage{subfigure}

\usepackage{array}
\usepackage{mathtools}
\usepackage[justification=centering]{caption}
\allowdisplaybreaks[3]
\usepackage{url}
\DeclareGraphicsExtensions{.png}
\graphicspath{{./Pics/}}
\usepackage{color}
\usepackage{xspace}

\def\tt{^{\rm T}}

\usepackage{hyperref}
\hypersetup{
    colorlinks=true,
    linkcolor=black,
    filecolor=gray,      
    urlcolor=blue,
    citecolor=blue,
}

\def\BibTeX{{\rm B\kern-.05em{\sc i\kern-.025em b}\kern-.08em
		T\kern-.1667em\lower.7ex\hbox{E}\kern-.125emX}}

\newtheorem{lemma}{Lemma}
\newtheorem{theorem}{Theorem}

\newtheorem{remark}{Remark}
\newtheorem{assumption}{Assumption}

\newtheorem{corollary}{Corollary}

\begin{document}
	
	\title{\LARGE \bf  Privacy-Preserving Dynamic Average Consensus by Masking Reference Signals}

\author{Mihitha Maithripala, Zongli Lin

\thanks{Mihitha Maithripala and Zongli Lin are with the Charles L. Brown Department of Electrical and
Computer Engineering, University of Virginia, Charlottesville, VA 22904, U.S.A (e-mail:
wpg8hm@virginia.edu; zl5y@virginia.edu. Corresponding author:
Zongli Lin.)

}
}

\maketitle	
 
\begin{abstract}
In multi-agent systems, dynamic average consensus (DAC) is a decentralized estimation strategy in which a set of agents tracks the average of time-varying reference signals. Because DAC requires exchanging state information with neighbors, attackers may gain access to these states and infer private information. In this paper, we develop a privacy-preserving method that protects each agent’s reference signal from external eavesdroppers and honest-but-curious agents while achieving the same convergence accuracy and convergence rate as conventional DAC. Our approach masks the reference signals by having each agent draw a random real number for each neighbor, exchanges that number over an encrypted channel at the initialization, and computes a masking value to form a masked reference. Then the agents run the conventional DAC algorithm using the masked references. Convergence and privacy analyses show that the proposed algorithm matches the convergence properties of conventional DAC while preserving the privacy of the reference signals. Numerical simulations validate the effectiveness of the proposed privacy-preserving DAC algorithm.

\end{abstract}



\section{Introduction}\label{sec:introduction}
Average consensus has attracted significant attention in multi-agent systems. They have been applied in numerous domains, including distributed load sharing~\cite{meng2016distributed}, distributed economic dispatch~\cite{chen2022privacy,chen2024quantized}, sensor fusion~\cite{aragues2012distributed,olfati2005consensus}, state-of-charge balancing in networked battery energy storage systems~\cite{meng2021distributed}, and multi-robot formation control~\cite{zhang2022privacy}. Average consensus algorithms are commonly classified as static~\cite{olfati2007consensus,ren2010distributed} or dynamic~\cite{bai2010robust,chen2012distributed,chen2014distributed}, depending on whether the goal is to calculate the average of the agent's initial states or to track the average of time-varying reference signals. Recently, there has been growing interest in DAC algorithms for applications, such as distributed control for optimal current sharing and voltage regulation in DC microgrids~\cite{aguirre2025dynamic} and distributed control of networked battery energy storage systems (BESSs)~\cite{maithripala2025privacy}.

An early study on DAC, presented in \cite{spanos2005dynamic}, proposes an algorithm that reaches steady state tracking of the average of time varying reference signals, given predefined initial conditions. Two estimation schemes for DAC, a proportional algorithm and a proportional integral algorithm, are proposed in \cite{freeman2006stability} to yield bounded steady state error. The proportional integral method is further developed in \cite{bai2010robust} to achieve zero steady state error for a class of time varying inputs. DAC algorithms with discontinuous update laws were proposed in~\cite{chen2012distributed} to achieve finite time convergence. Nonlinear DAC schemes were introduced in~\cite{nosrati2012dynamic}. Robust DAC algorithms with discontinuous dynamics resilient to initialization errors are presented in~\cite{george2019robust}. Recent work \cite{xiao2023dynamic} develops linear, exosystem-based DAC schemes that achieve zero steady-state error for signals with known frequency components and are robust to interruptions of the network connectivity.

All mentioned DAC algorithms require to exchange their state values with its neighbors. This direct information exchange remains vulnerable to adversaries or eavesdroppers. If attackers gain access to those state values, they can design targeted strategies to observe their private states like local reference signals and then can disrupt the operation of other agents. In state of charge balancing for a networked battery energy storage system (BESS)~\cite{meng2021distributed}, each battery’s state, representing the energy available at that unit, serves as the reference signal in the consensus scheme. These values should remain private because disclosure to an external party or an insider can destabilize the system or degrade performance. In energy trading applications that use DAC, such disclosure can enable manipulation of bidding strategies, prices, or traded energy volumes, leading to unfair outcomes or instability across the network.

Several approaches have been introduced to mitigate these threats. However, most privacy preserving schemes focus on the static average consensus. Differential privacy \cite{he2020differential,Nozari2015DifferentiallyPA,gao2018differentially} is a commonly used method that preserves private information by injecting noise to the shared data. Because of this added noise, it generally cannot guarantee exact convergence to the true average. In~\cite{mo2016privacy,zhang2022privacy}, it is shown that by carefully designed perturbed signals, accurate convergence can be achieved. A cryptography-based scheme is introduced in~\cite{ruan2019secure}, where exchanged information is encrypted to protect against eavesdroppers. However, these schemes incur substantial computational and communication overhead as encryption and decryption are resource intensive. A state decomposition method for static dynamic average consensus with low computational overhead is proposed in~\cite{wang2019privacy}.
The privacy-preserving schemes considered so far focus on static average consensus. In~\cite{kia2015dynamic}, a privacy-preserving DAC algorithm is proposed that protects each agent’s reference signal against both external and internal attackers. But the privacy guarantee holds only when the reference signals are dynamic over an initial time window. Moreover, the state decomposition method of~\cite{wang2019privacy} has been extended to the DAC setting in~\cite{zhang2022privacy}.

Although the state decomposition based DAC in~\cite{zhang2022privacy} offers strong privacy against external eavesdroppers, the modification of the graph Laplacian degrades the convergence rate relative to that of conventional DAC. Moreover, that scheme protects the privacy of the reference signal only against external eavesdroppers. These limitations motivate the design of a privacy-preserving DAC that (i) matches conventional DAC in both convergence rate and steady-state tracking error, (ii) safeguards each agent’s reference against external eavesdroppers and honest-but-curious internal agents, and (iii) maintains low computational complexity.

In this study, we develop a new privacy-preserving DAC algorithm that uses masked reference signals. During mask generation at the initialization, each agent draws an independent random real number for each neighbor and shares it via an encrypted channel. Each agent then computes a masking value from the shared numbers, obfuscates its reference signal by adding this mask, and finally performs the DAC algorithm using the masked reference. Here, the masking value is computed such that the average of all masking signals equals zero. Therefore, the average of the masked reference signals is equal to the average of the reference signals without the mask. We present a convergence analysis showing that the proposed algorithm converges to the same neighborhood as conventional DAC and attains the same convergence rate, thereby outperforming state decomposition methods that alter the dynamics of the conventional DAC algorithm and degrade convergence. We also establish privacy guarantees against external eavesdroppers by proving that they cannot recover the reference signals of any agent with the available information set. Furthermore, we prove that an honest-but-curious agent cannot infer a target neighbor’s reference signal, provided the target has at least one legitimate neighbor. Finally, simulations validate the analysis and demonstrate the effectiveness of the algorithm.

The remainder of the paper is organized as follows. Section~\ref{sec:preliminaries} presents preliminaries and the problem formulation, including the communication network model and the privacy definition for the DAC algorithm. Section~\ref{sec:Main} introduces the proposed privacy-preserving DAC algorithm and provides convergence and privacy analyses against external eavesdroppers and honest-but-curious agents. Section~\ref{sec:Simulation} reports simulation results that validate the algorithm’s convergence and privacy properties. Finally, Section~\ref{sec:Conclusion} concludes the paper.

\section{Preliminaries and Problem Formulation}\label{sec:preliminaries}
\subsection{Communication Network}
 Consider a connected undirected graph $\mathcal{G}(\mathcal{V}, \mathcal{E})$ consisting of $N\geq3$ agents. The node set is $\mathcal{V} = \{v_1, v_2, \ldots, v_N\}$, and the edge set is $\mathcal{E} \subseteq \mathcal{V} \times \mathcal{V}$. For each node $i$, let $\mathcal{N}_i = \{v_j \in \mathcal{V} : (v_i, v_j) \in \mathcal{E}\}$ denote the set of neighbors, with $N_i = |\mathcal{N}_i|$ representing the number of neighbors. The Laplacian matrix associated with the adjacency matrix $A$ is defined as $L = (l_{ij}) \in \mathbb{R}^{N \times N}$, where $l_{ij} = -a_{ij}$ for $i \neq j$ and $l_{ii} = \sum_{j=1, j \neq i}^{N} a_{ij}$. The matrix $L$ is symmetric and positive semidefinite, with eigenvalues ordered as $0 = \lambda_1 \leq \lambda_2 \leq \cdots \leq \lambda_N$. Let $\mathbf{1}_n$ denote the $n$-dimensional all-ones vector, and let $\|x\|_p$ denote the $p$-norm of a vector $x$, where $p \in [1, \infty]$.

We impose the following assumption on the communication graph $\mathcal{G}$.

\begin{assumption} \label{ass: graph}
The communication graph $\mathcal{G}=(\mathcal{V},\mathcal{E})$ is a connected undirected graph with $N\ge 3$ agents.
\end{assumption}

\subsection{ Dynamic Average Consensus Algorithm and Privacy Definition}\label{sec:attack}

This study considers a network of $N \geq 3$ interconnected agents, each with a time-varying internal state $x_i(t) \in \mathbb{R}$, that aim to cooperatively estimate their average through the DAC algorithm in~\cite{spanos2005dynamic},
\begin{equation}
\begin{aligned}
\dot{\hat{x}}_{{\rm a},i}(t) &= \dot{x}_i(t) - \beta \sum_{j=1}^{N} a_{ij}\big(\hat{x}_{{\rm a},i}(t) - \hat{x}_{{\rm a},j}(t)\big),\\
\hat{x}_{{\rm a},i}(0) &= x_i(0) ,\ \ i\in\mathcal{V}.
\end{aligned}
\label{eq:1}
\end{equation}
Here, $\hat{x}_{{\rm a},i}$ denotes the local estimate of the average $x_{\rm a}=\tfrac{1}{N} \sum_{i=1}^{N} x_i(t)$, $\beta > 0$ is a tunable design parameter, and $a_{ij} \in \{0,1\}$ indicates the communication connectivity between agents $i$ and $j$. Although this approach guarantees consensus, it requires each agent to transmit its estimate $\hat{x}_{{\rm a},i}(t)$, which indirectly reveals information about the underlying signal $x_i(t)$ as well as its time derivative $\dot{x}_i(t)$.

The vulnerability of the DAC scheme was highlighted in~\cite{zhang2022privacy}, where an observer-based attack model was introduced. Assuming that $x_i(t), \dot{x}_i(t), \ddot{x}_i(t) \in \mathcal{L}_\infty$, the study demonstrated that adversaries, either within the network or external to it, can extract or reconstruct sensitive information such as the agents’ local time-varying reference signals $x_i(t)$ through observer-based techniques. These findings indicate that the standard DAC algorithm is inherently prone to privacy breaches unless enhanced with privacy-preserving strategies.

Our objective in this paper is to design a novel privacy preserving DAC algorithm that protects the local time-varying reference signals $x_i(t)$. To formalize this goal, we consider two categories of adversaries, (1) internal honest-but-curious agents, who adhere to the update rule. but may attempt to infer the private states of their neighbors from the exchanged messages, and (2) external eavesdroppers, who can intercept all communication between neighboring agents and may exploit this information to reconstruct the sensitive states of all agents.

\section{Main Results}\label{sec:Main}

\subsection{Privacy-Preserving Algorithm Design }
 We first present a privacy-preserving algorithm with reference signal masking to protect the private information $x_i(t)$ of agents while achieving DAC. The algorithm consists of a masking-reference generation phase and a consensus update phase.
 
 During the masking stage, agent $i$ generates mutually independent real random values $\eta_{ij}$ for each neighbor $j \in \mathcal{N}_i$ and sends $\eta_{ij}$ to neighbor $j$ over an encrypted link. 

Then each agent $i$ calculate the mask $m_{i}$ based on following equation,
 \begin{equation}
             m_i = \sum_{j\in\mathcal{N}_i}\big(\eta_{ji}-\eta_{ij}\big),
         \label{eq:2}
 \end{equation}
and calculate the masked reference signal $ x_{{\rm m}, i}(t)$ as
 \begin{equation}
             x_{{\rm m}, i}(t) = x_i(t) + m_i.
         \label{eq:3}
 \end{equation}

In the updating phase, all agents perform the following updating rule,
\begin{equation}
\begin{aligned}
\dot{\hat{x}}_{{\rm a},i}(t) &= \dot{x}_{\rm m,i}(t) - \beta \sum_{j=1}^{N} a_{ij}\big(\hat{x}_{{\rm a},i}(t) - \hat{x}_{{\rm a},j}(t)\big),\\
\hat{x}_{{\rm a},i}(0) &= x_i(0)+m_i ,\ \ i\in\mathcal{V}.
\end{aligned}
\label{eq:4}
\end{equation}

By differentiating \eqref{eq:3} with respect to time $t$, we obtain $\dot{x}_{\mathrm{m},i}(t) = \dot{x}_i(t)$. Therefore, the new updating rule is identical to the \eqref{eq:1}. Algorithm 1 illustrates the masking phase and DAC updating phase.

 \begin{AlgRuleBox}
\noindent\textbf{Algorithm 1.}\quad {Privacy protection via masking the reference signal $x_i(t)$}
\par\smallskip
\noindent\rule{\linewidth}{0.4pt}

\medskip
$\blacktriangleright$ \textbf{Mask synthesis:}
\begin{enumerate}
  \item For each neighbor $j\in\mathcal{N}_i$, agent $i$ draws an independent real random value $\eta_{ij}$.
  \item Agent $i$ sends $\eta_{ij}$ to neighbor $j$ and, in return, collects $\eta_{ji}$ from $j$.
  \item It then forms the net mask
        \[
        m_i = \sum_{j\in\mathcal{N}_i}\big(\eta_{ji}-\eta_{ij}\big).
        \]
  \item Using~\eqref{eq:3}, agent $i$ obfuscates its private reference signal $x_i(t)$  with the mask $m_i$.
\end{enumerate}

$\blacktriangleright$ \textbf{Embedding into the DAC update:}
\begin{enumerate}
  \item Agent $i$ proceeds with the distributed calculation, applying the update rule \eqref{eq:4}.
\end{enumerate}
\end{AlgRuleBox}

Note that the masked reference signal $x_{{\rm m},i}(t)$ is constructed so that the added random offsets do not affect the final tracking accuracy. In particular, the combination of each agent’s locally generated numbers with those received from its neighbors cancels network-wide, eliminating the net effect of the perturbation. We subsequently establish that the designed algorithm leaves the final tracking accuracy unchanged compared to the conventional DAC algorithm in \eqref{eq:1}.

\begin{remark}
When the network topology changes, the masking setup must be reinitialized. Agents should regenerate the random values and repeat their one-time exchange. Any risk of leakage during this brief exchange is mitigated by encryption~\cite{chen2024new}. Since this random-number exchange occurs only once per initialization, the added computational cost is negligible.
\end{remark}

\begin{assumption}\label{ass:encryption}
During the mask generation phase all neighbor-to-neighbor links are encrypted. For each edge $(i,j)\in\mathcal{E}$ the value $\eta_{ij}$ sent by agent $i$ is known only to $i$ and $j$ and the value $\eta_{ji}$ sent by agent $j$ is known only to $j$ and $i$.
\end{assumption}

\begin{lemma}\label{lemma1}
\cite{kia2019tutorial} If each input signal $x_i(t)$ in~\eqref{eq:1} is bounded, there exists a constant $\gamma>0$ such that, for any gain $\beta>0$, the estimate $\hat{x}_{{\rm a},i}(t)$ produced by \eqref{eq:1} over a connected graph converges to a bounded neighborhood of $\tfrac{1}{N}\mathbf{1}_N\tt x(t)$ at an exponential rate, i.e.,
\begin{align*}
\lim_{t \to \infty} \sup \left| \hat{x}_{{\rm a},i}(t) - \tfrac{1}{N}\mathbf 1_N\tt x(t) \right| &\leq \frac{\gamma}{\beta \lambda_2},\quad i\in\mathcal V . 
\end{align*}
where
\[
\gamma = \sup_{\tau \in [t,\infty)}
\left\|\left( I_N - \tfrac{1}{N}\mathbf{1}_N \mathbf{1}_N\tt \right)\dot{x}(\tau)\right\|, 
\]
$x(t) = [x_1(t)\; x_2(t)\,\dotsc\, x_N(t)]\tt$, and $\lambda_2$ is the smallest nonzero eigenvalue of Laplacian matrix $L$.
\end{lemma}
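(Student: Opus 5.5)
We prove Lemma~\ref{lemma1} by splitting the estimate into an average component and a disagreement component, and bounding the disagreement with an input-to-state estimate driven by $\dot x$.

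\textbf{Error coordinates.} Stack the estimates as $\hat x_{\rm a}(t)$ and write \eqref{eq:1} compactly as
\[
\dot{\hat x}_{\rm a} = \dot x - \beta L \hat x_{\rm a}.
\]
Define the projector $\Pi = I_N - \tfrac{1}{N}\mathbf 1_N\mathbf 1_N\tt$. Since $\mathbf 1_N\tt L = 0$, left-multiplying by $\tfrac1N\mathbf 1_N\tt$ gives $\tfrac{d}{dt}\big(\tfrac1N\mathbf 1_N\tt(\hat x_{\rm a}-x)\big)=0$. The initialization $\hat x_{\rm a}(0)=x(0)$ then yields the invariant $\tfrac1N\mathbf 1_N\tt\hat x_{\rm a}(t)=\tfrac1N\mathbf 1_N\tt x(t)$ for all $t$.

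Consequently the tracking error is
\[
e(t):=\hat x_{\rm a}(t)-\tfrac1N\mathbf 1_N\mathbf 1_N\tt x(t)=\Pi\hat x_{\rm a}(t),
\]
so $e$ lies in $\mathbf 1_N^\perp$. Because $L\mathbf 1_N=0$, we have $L\hat x_{\rm a}=Le$, and applying $\Pi$ (which commutes with $L$) gives
\[
\dot e = -\beta L e + \Pi\dot x .
\]

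\textbf{Decay and ISS bound.} On $\mathbf 1_N^\perp$ the symmetric Laplacian satisfies $e\tt L e\ge \lambda_2\|e\|^2$, with $\lambda_2>0$ by connectivity (Assumption~\ref{ass: graph}). Hence $\|e^{-\beta L t}v\|\le e^{-\beta\lambda_2 t}\|v\|$ for every $v\perp\mathbf 1_N$. Variation of constants gives
\[
e(t)=e^{-\beta L (t-t_0)}e(t_0)+\int_{t_0}^t e^{-\beta L(t-\tau)}\Pi\dot x(\tau)\,d\tau .
\]
Taking norms,
\[
\|e(t)\|\le e^{-\beta\lambda_2(t-t_0)}\|e(t_0)\| + \frac{1-e^{-\beta\lambda_2(t-t_0)}}{\beta\lambda_2}\sup_{\tau\ge t_0}\|\Pi\dot x(\tau)\|.
\]
The transient term decays exponentially at rate $\beta\lambda_2$. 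Letting $t\to\infty$ gives $\limsup_{t\to\infty}\|e(t)\|\le \gamma/(\beta\lambda_2)$. Finally, $|e_i|\le\|e\|$ gives the componentwise statement.

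\textbf{Main obstacle.} The delicate point is the hypotheses on the input. The statement assumes $x$ bounded, but the bound involves $\dot x$, so we must take $\dot x\in\mathcal L_\infty$, as in the attack model that assumes $\dot x\in\mathcal L_\infty$; this guarantees $\gamma<\infty$. With that, $\gamma$ as defined is a nonincreasing function of the starting time, and the inequality holds for every $t_0$. Taking $t_0$ large then gives the tail supremum exactly as written in the lemma.

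A second subtlety is that the exponential rate must not depend on the input. This is handled by the fact that the homogeneous part $-\beta L$ restricted to $\mathbf 1_N^\perp$ has spectrum in $[\beta\lambda_2,\beta\lambda_N]$, independent of $x$. Alternatively, the same bound can be obtained via the Lyapunov function $V=\tfrac12\|e\|^2$, for which $\dot V\le-\beta\lambda_2\|e\|^2+\|e\|\,\|\Pi\dot x\|$.
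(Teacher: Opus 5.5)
Your proof is correct. The paper gives no argument of its own and simply defers to the cited tutorial of Kia et al., and your route is essentially the standard one behind that citation: the initialization $\hat x_{\rm a}(0)=x(0)$ freezes the average, so $e=\Pi\hat x_{\rm a}$ obeys $\dot e=-\beta Le+\Pi\dot x$ on $\mathbf 1_N^\perp$. Variation of constants with $\|e^{-\beta Lt}v\|\le e^{-\beta\lambda_2 t}\|v\|$ then yields the exponential transient and the ultimate bound $\gamma/(\beta\lambda_2)$.

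Your remark that the hypothesis must really be boundedness of $\dot x$ is a legitimate repair of the lemma as written, since bounded $x$ alone does not make $\gamma$ finite. So is your reading of $\gamma$ as a tail supremum valid from any starting time. Both are consistent with the $\dot x_i\in\mathcal L_\infty$ assumption the paper already imposes in its attack model.
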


\begin{theorem}
\label{thm:ppdac-exact}
Let $G(\mathcal V,\mathcal E)$ be a connected undirected graph (Assumption~1), and let $A=[a_{ij}]$ be a doubly stochastic weight matrix. 
Each agent $i$ constructs $m_i$ by \eqref{eq:2} and the masked reference $x_{{\rm m},i}(t)=x_i(t)+m_i$ by \eqref{eq:3}, and runs the DAC update rule~\eqref{eq:4}. 
Then the estimate $\hat{x}_{{\rm a},i}(t)$ produced by \eqref{eq:4} converges to a bounded neighborhood of $\tfrac{1}{N}\mathbf{1}_N\tt x(t)$ as in the conventional DAC (Lemma~\ref{lemma1}) and with an appropriate choice of $\beta$, the bounds can be made arbitrarily small.
\end{theorem}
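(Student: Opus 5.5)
The plan is to reduce the masked algorithm \eqref{eq:4} to the conventional DAC \eqref{eq:1} driven by the masked references $x_{\mathrm m}(t)=x(t)+m$, where $m=[m_1\;\cdots\;m_N]\tt$, and then invoke Lemma~\ref{lemma1}. Three facts are needed: the masks sum to zero, so the masked and true averages coincide; the masked inputs satisfy the hypotheses of Lemma~\ref{lemma1}; and the constant $\gamma$ is unchanged by masking.

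\textbf{Step 1 (zero-sum mask).} Summing \eqref{eq:2} over all agents gives
\[
\sum_{i=1}^N m_i=\sum_{i=1}^N\sum_{j\in\mathcal N_i}\eta_{ji}-\sum_{i=1}^N\sum_{j\in\mathcal N_i}\eta_{ij}.
\]
Because $\mathcal G$ is undirected, $j\in\mathcal N_i$ if and only if $i\in\mathcal N_j$. Both double sums therefore run over the same set of ordered pairs $\{(i,j):(i,j)\in\mathcal E\}$, with the indices relabeled. They are equal, so $\mathbf 1_N\tt m=0$. It follows that $\tfrac1N\mathbf 1_N\tt x_{\mathrm m}(t)=\tfrac1N\mathbf 1_N\tt x(t)$ for all $t$.

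\textbf{Step 2 (reduction).} Since each $m_i$ is a constant fixed at initialization, $\dot x_{\mathrm m,i}=\dot x_i$. The initialization $\hat x_{\mathrm a,i}(0)=x_i(0)+m_i=x_{\mathrm m,i}(0)$ matches the required initialization in \eqref{eq:1} with $x_i$ replaced by $x_{\mathrm m,i}$. Hence \eqref{eq:4} is exactly \eqref{eq:1} with input $x_{\mathrm m}$.

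Boundedness of the inputs is preserved, since $x_{\mathrm m,i}$ is a bounded signal plus a finite constant. The $\eta_{ij}$ are finite realizations, so $m$ is finite.

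Only the Laplacian and connectivity enter Lemma~\ref{lemma1}. Symmetry of $A$, which holds for an undirected graph, is what makes $\mathbf 1_N\tt L=0$; the doubly stochastic assumption is consistent with this. Applying Lemma~\ref{lemma1} gives exponential convergence of each $\hat x_{\mathrm a,i}$ to a neighborhood of $\tfrac1N\mathbf 1_N\tt x_{\mathrm m}(t)$, of radius $\gamma_{\mathrm m}/(\beta\lambda_2)$. By Step 1 this is the same neighborhood of $\tfrac1N\mathbf 1_N\tt x(t)$.

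\textbf{Step 3 (same bound and rate).} We have
\[
\gamma_{\mathrm m}=\sup\bigl\|(I_N-\tfrac1N\mathbf 1_N\mathbf 1_N\tt)\dot x_{\mathrm m}\bigr\|=\gamma,
\]
because $\dot x_{\mathrm m}=\dot x$. The bound is therefore identical to the conventional one. The rate is also identical, since the disagreement dynamics $e=\hat x_{\mathrm a}-\tfrac1N\mathbf 1_N\mathbf 1_N\tt x_{\mathrm m}$ satisfy $\dot e=-\beta L e+(I_N-\tfrac1N\mathbf 1_N\mathbf 1_N\tt)\dot x$, which is the same equation as in the conventional case.

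Choosing $\beta\ge \gamma/(\lambda_2\varepsilon)$ makes the ultimate bound at most any prescribed $\varepsilon>0$.

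\textbf{Expected obstacle.} The only substantive step is the cancellation $\mathbf 1_N\tt m=0$. It relies on each edge exchange being reciprocal, with both $\eta_{ij}$ and $\eta_{ji}$ appearing with opposite signs in $m_i$ and $m_j$. I would write it out edge by edge: each edge $\{i,j\}$ contributes $(\eta_{ji}-\eta_{ij})+(\eta_{ij}-\eta_{ji})=0$.

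A minor point is reconciling the doubly stochastic weights with the $0/1$ weights $a_{ij}$ of Lemma~\ref{lemma1}. I would note that the lemma's proof uses only symmetry and connectivity, with $\lambda_2$ taken as the second-smallest eigenvalue of the Laplacian of the chosen weights.
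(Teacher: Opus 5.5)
Your proposal is correct and follows essentially the same route as the paper. The paper shows $\mathbf{1}_N^{\rm T} m = 0$ via $m = (\phi^{\rm T} - \phi)\mathbf{1}_N$, which is equivalent to your edge-by-edge cancellation; it then uses $\dot x_{{\rm m},i} = \dot x_i$ to identify \eqref{eq:4} with the conventional DAC with unchanged $\gamma$, applies Lemma~\ref{lemma1}, and replaces $\tfrac{1}{N}\mathbf{1}_N^{\rm T} x_{\rm m}(t)$ by $\tfrac{1}{N}\mathbf{1}_N^{\rm T} x(t)$, while your extra checks on initialization, boundedness, and the doubly stochastic versus $0/1$ weights are sound details the paper leaves implicit.
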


\begin{proof}
Define $\phi=[\eta_{ij}]\in\mathbb{R}^{N\times N}$ with $\eta_{ij}=0$ if $j\notin\mathcal N_i$ or $j=i$. 
Stack $m_i$ into $m = [m_1\; m_2\, \dotsc\, m_N]\tt$. From \eqref{eq:2},
\[
m \;=\; (\phi\tt-\phi)\mathbf 1_N .
\]
Hence
\[
\mathbf 1_N\tt m \;=\; \mathbf 1_N\tt(\phi\tt-\phi)\mathbf 1_N
= \mathbf 1_N\tt\phi\tt\mathbf 1_N - \mathbf 1_N\tt\phi\mathbf 1_N
= 0,
\]
so the masks sum to zero.

Let $x_m(t) = [x_{{\rm m},1}(t)\; x_{{\rm m},2}(t)\, \dotsc\, x_{{\rm m},N}(t)]\tt$. Hence
\[
\mathbf 1_N\tt x_{\rm m}(t)=\mathbf 1_N\tt x(t)+\mathbf 1_N\tt m=\mathbf 1_N\tt x(t).
\]

Differentiating \eqref{eq:3} gives $\dot x_{{\rm m},i}(t)=\dot x_i(t)$ (the masks $m_i$ are time-invariant). 
Stacking \eqref{eq:4} yields
\[
\dot{\hat x}_{\rm a}(t) \;=\; \dot x_{\rm m}(t) - \beta L\,\hat x_{\rm a}(t)
\;=\; \dot x(t) - \beta L\,\hat x_{\rm a}(t),
\]
where $L$ is the graph Laplacian induced by $A$.
This is exactly the conventional DAC updating rule.

We can get the tracking error for the new algorithm as 
\begin{align*}
\lim_{t \to \infty} \sup \left| \hat{x}_{{\rm a},i}(t) - \tfrac{1}{N}\mathbf 1_N\tt x_{\rm m}(t) \right| &\leq \frac{\gamma}{\beta \lambda_2},\quad i\in\mathcal V . 
\end{align*}
where
\[\gamma = \sup_{\tau \in [t,\infty)}
\left\|\left( I_N - \tfrac{1}{N}\mathbf{1}_N \mathbf{1}_N\tt \right)\dot{x}(\tau)\right\|. 
\]

Since $\mathbf{1}_N\tt x_{\rm m}(t)=\mathbf{1}_N\tt x(t)$ for all $t$ and $\dot{x}_{{\rm m},i}(t)=\dot{x}_i(t)$, the above tracking error can be reformulated as 
\begin{align*}
\lim_{t \to \infty} \sup \left| \hat{x}_{{\rm a},i}(t) - \tfrac{1}{N}\mathbf 1_N\tt x(t) \right| &\leq \frac{\gamma}{\beta \lambda_2},\quad  i\in\mathcal V . 
\end{align*}
where
\[\gamma = \sup_{\tau \in [t,\infty)}
\left\|\left( I_N - \tfrac{1}{N}\mathbf{1}_N \mathbf{1}_N\tt \right)\dot{x}(\tau)\right\|. 
\]
Thus, by Lemma~\ref{lemma1}, the estimate $\hat{x}_{{\rm a},i}(t)$ generated by~\eqref{eq:4} converges to the same bounded neighborhood of $\tfrac{1}{N}\mathbf{1}_N\tt x(t)$ as in the conventional DAC. The bounds can be made as small as desired by adjusting the design gain $\beta$.
\end{proof}

\begin{remark}
The convergence rate of consensus algorithms is governed by the  $\lambda_2$ (the second-smallest eigenvalue) of the graph Laplacian $L$~\cite{olfati2007consensus}. The DAC convergence rate is no worse than $\beta\lambda_2$~\cite{kia2019tutorial}. By Theorem~1, the masking of $x_i(t)$ leaves $L$ unchanged. Therefore, the convergence rate of the proposed algorithm is the same as in the conventional DAC setting.

\end{remark}

\subsection{Privacy Analysis }

We analyze the privacy-preserving properties of the proposed algorithm against external eavesdroppers and honest-but-curious adversaries in this subsection.

\subsubsection{Privacy Analysis Against External Eavesdroppers}

Since an external eavesdropper can wiretap all network links, the information available to it are defined as
\[
I_e=\{\, A,\ \beta,\ \hat{x}_{{\rm a},i}^{\alpha}(t)\mid i\in\mathcal V,\ t\ge 0 \,\}.
\]

\begin{theorem}
\label{ext-privacy}
Under Assumptions~\ref{ass: graph} and~\ref{ass:encryption}, the proposed privacy-preserving algorithm prevents external eavesdroppers from uniquely identifying any agent’s reference trajectory $x_i(t)$. 
\end{theorem}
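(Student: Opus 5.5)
The approach is an indistinguishability argument. We show that for any true configuration there is a different configuration that produces exactly the same information set $I_e$. The alternative configuration consists of other reference signals $x_i'(t)$ and other random numbers $\eta_{ij}'$, and it gives a different $x_i'(t)\neq x_i(t)$ for every agent. Since the eavesdropper observes only $A$, $\beta$ and the transmitted estimates $\hat{x}_{{\rm a},i}(t)$, it cannot tell the two configurations apart. So it cannot uniquely identify any reference trajectory.

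First, I would describe what the eavesdropper observes. By the stacked form derived in the proof of Theorem~\ref{thm:ppdac-exact}, the trajectory $\hat x_{\rm a}(t)$ solves
\[
\dot{\hat x}_{\rm a}(t)=\dot x(t)-\beta L\hat x_{\rm a}(t),\qquad \hat x_{\rm a}(0)=x(0)+m .
\]
This trajectory depends on the private data only through the masked signal $x_{\rm m}(t)=x(t)+m$, since $\dot x=\dot x_{\rm m}$. Consequently, any two configurations with the same $x_{\rm m}(\cdot)$ produce identical $I_e$. By Assumption~\ref{ass:encryption}, the individual values $\eta_{ij}$ never enter $I_e$, so they are unconstrained from the eavesdropper's viewpoint.

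Second, I would construct the alternative configuration. Pick an arbitrary vector $\delta\in\mathbb R^N$ with $\mathbf 1_N\tt\delta=0$, and set $x'(t)=x(t)-\delta$ and $m'=m+\delta$. Then $x'+m'=x+m$, so the observations coincide. It remains to show that $m'$ can be realized by admissible random values $\eta'_{ij}$ on the edges, that is, $m'=(\phi'^{\rm T}-\phi')\mathbf 1_N$ with $\phi'$ supported on $\mathcal E$. Write $\phi'=\phi+\Delta$. Then it suffices that $(\Delta\tt-\Delta)\mathbf 1_N=\delta$. For each edge, the quantity $\Delta_{ji}-\Delta_{ij}$ acts as an edge flow, and the vector $(\Delta\tt-\Delta)\mathbf 1_N$ is $B f$, where $B$ is the incidence matrix and $f$ is the vector of edge flows. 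Because $\mathcal G$ is connected (Assumption~\ref{ass: graph}), the range of $B$ is exactly $\{\delta:\mathbf 1_N\tt\delta=0\}$. Hence a suitable $f$ exists, for example $f=B\tt L^{\dagger}\delta$. We then realize $f$ by choosing $\Delta_{ij}=0$ and $\Delta_{ji}=f_e$ on each edge $e=(i,j)$.

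Third, I would choose $\delta$ so that every agent's reference is changed. Taking $\delta$ with all entries nonzero and summing to zero is possible because $N\ge3$; for instance, $\delta=(1,1,\dots,1,-(N-1))\tt$. In fact $\delta$ can range over a whole $(N-1)$-dimensional subspace, so there are infinitely many reference trajectories consistent with $I_e$. Since the $\eta_{ij}$ are arbitrary real random numbers, the alternative $\eta'_{ij}$ are equally admissible draws.

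I expect the main obstacle to be the realizability step in the second paragraph. There one must argue cleanly, via connectivity and the incidence-matrix range, that every zero-sum perturbation of the mask vector comes from some edge assignment of the $\eta$'s. One must also note that the eavesdropper has no prior that pins down the $\eta_{ij}$. The remaining steps are direct consequences of the dynamics and of the formula for $m$.
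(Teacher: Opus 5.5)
Your proposal is correct and follows the paper's own argument: shift $x$ by a constant zero-sum vector and compensate in the mask, so that $x_{\rm m}$, and hence every transmitted $\hat x_{{\rm a},i}$, is unchanged. Your incidence-matrix step, which shows that the shifted mask $m+\delta$ is generated by some admissible edge values $\eta'_{ij}$, and your choice of $\delta$ with all entries nonzero make explicit two points the paper's proof leaves implicit.
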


\begin{proof}
Let $x(t)=[x_1(t)\; x_2(t)\, \dotsc\, x_N(t)]\tt$, $m = [m_1\; m_2\, \dotsc\, m_N(t)]\tt$, and $x_{\rm m}(t)=x(t)+m$ in \eqref{eq:3}. By construction of the masks in \eqref{eq:2} and encryption in the first phase, the eavesdropper does not know $m$ and only sees quantities that are functions of $x_{\rm m}(t)$ and the algorithm states driven by $x_{\rm m}(t)$ (e.g., $\hat{x}_{{\rm a},i}^{\alpha}(t)$). Moreover, $\mathbf{1}_N\tt m=0$, so $\frac{1}{N}\mathbf{1}_N\tt x_{\rm m}(t)=\frac{1}{N}\mathbf{1}_N\tt x(t)$.

Consider any vector $s\in\mathbb{R}^N$ with $\mathbf{1}_N\tt s=0$ and define an alternative pair
\[
x'(t)=x(t)+s,\quad m'=m-s.
\]

Then \(x_{\rm m}'(t)=x'(t)+m'=x(t)+m=x_{\rm m}(t)\).
Since \(s\) and \(m\) are time-invariant vectors, we have
\(\dot{m}'=\dot{m}=0\) and \(\dot{x}'(t)=\dot{x}(t)\), hence
\[
\dot{x}_{\rm m}'(t)=\dot{x}'(t)+\dot{m}'=\dot{x}(t)=\dot{x}_{\rm m}(t).
\]
Therefore, for these alternative pairs \((x,m)\) and \((x',m')\), all transmitted signals produced by the update law~\eqref{eq:4} are identical and the estimate $\hat{x}_{{\rm a},i}^{\alpha}(t)$ driven by  \((x',m')\) converges to a neighborhood of the $\tfrac{1}{N}\mathbf{1}_N\tt x(t)$ which is similar to the \((x,m)\) case.
An external eavesdropper observing the transcript \(I_e\) cannot tell the two cases apart.
Consequently, \(x(t)\) is not uniquely recoverable by external eavesdroppers. Any $x'(t)=x(t)+s$ with $\mathbf{1}_N\tt s=0$ looks the same to the observer.

Hence, under the proposed privacy-preserving algorithm, each agent’s trajectory $x_i(t)$ is protected from external eavesdroppers.
\end{proof}

\subsubsection{Privacy Analysis Against Honest–but–Curious Agents}
Let $H$ be the set of honest–but–curious nodes and $L$ the set of legitimate nodes, with $H\cap L=\varnothing$ and $H\cup L=\mathcal V$. 
Honest–but–curious nodes may collude and share what they observe. 
A legitimate node $l\in L$ follow the consensus algorithm as specified and do not try to infer their neighbors’ states. 
Since an honest–but–curious node  receive everything sent by its neighbors, the information available to an honest–but–curious node $h\in H$ is defined as
\[
\mathcal{I}_h=\{\, A,\ \beta,\ \hat{x}_{{\rm a},h}^{\alpha}(t), \ \hat{x}_{{\rm a},s}^{\alpha}(t), \ x_h(t) , \ \dot{x}_h(t), \eta_{hs} , \eta_{sh} \mid \ t\ge 0 \,\}.
\]
where $ s\in\mathcal N_h $. Accordingly, we define the information set observable by the honest–but–curious agents as
\[
\mathcal{I}_H = \{\, \mathcal{I}_h \mid h \in H \,\}.
\]

\begin{theorem}
\label{hbc-privacy}
Under Assumptions~\ref{ass: graph} and~\ref{ass:encryption}, the proposed privacy-preserving algorithm ensures that an honest–but–curious agent learns only the pairwise values on its own incident edges.
Let $\mathcal H\subseteq H$ be any coalition of honest–but–curious agents that may share all observations. 
Fix a target legitimate agent $i\notin\mathcal H$. 
If at least one neighbor of $i$ does not collude (i.e., $\mathcal N_i\setminus\mathcal H\neq\varnothing$) which means that at least one neighbor of $i$ is legitimate, then the true reference trajectory $x_i(t)$ of target legitimate agent $i$ is not uniquely identifiable by honest-but-curious agents. 
\end{theorem}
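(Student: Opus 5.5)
We argue by indistinguishability, as in the proof of Theorem~\ref{ext-privacy}: we construct a second execution in which agent $i$ has a different reference trajectory, yet every quantity in $\mathcal I_{\mathcal H}=\{\mathcal I_h\mid h\in\mathcal H\}$ is identical. Since $\mathcal N_i\setminus\mathcal H\neq\varnothing$, fix a non-colluding neighbor $l\in\mathcal N_i$, $l\notin\mathcal H$. By Assumption~\ref{ass:encryption}, the pairwise values $\eta_{il}$ and $\eta_{li}$ are known only to $i$ and $l$, so no member of $\mathcal H$ sees them. The coalition does see $\eta_{hs},\eta_{sh}$ for its own incident edges. From $\mathcal I_{\mathcal H}$ it can therefore compute only partial sums of the masks $m_i$ and $m_l$; the contribution of edge $(i,l)$ remains hidden. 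This justifies the first claim, that a curious agent learns only the pairwise values on its own incident edges.

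\textbf{Perturbation.} Fix any $\delta\in\mathbb R\setminus\{0\}$ and define the alternative execution
\[
x_i'(t)=x_i(t)+\delta,\quad x_l'(t)=x_l(t)-\delta,\quad \eta_{li}'=\eta_{li}-\tfrac{\delta}{2},\quad \eta_{il}'=\eta_{il}+\tfrac{\delta}{2},
\]
leaving all other $x_j$ and $\eta$'s unchanged. By \eqref{eq:2}, $m_i'=m_i+(\eta_{li}'-\eta_{li})-(\eta_{il}'-\eta_{il})=m_i-\delta$. Symmetrically, $m_l'=m_l+\delta$, and every other mask is unchanged. Hence $x_{\rm m}'(t)=x_{\rm m}(t)$ componentwise, so the initial conditions $\hat x_{{\rm a},j}(0)=x_j(0)+m_j$ coincide. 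Also $\dot x'(t)=\dot x(t)$ because $\delta$ is constant. The stacked dynamics $\dot{\hat x}_{\rm a}=\dot x_{\rm m}-\beta L\hat x_{\rm a}$, as in the proof of Theorem~\ref{thm:ppdac-exact}, then produce identical trajectories $\hat x_{{\rm a},j}(t)$ for all $j$ and $t\ge0$. Finally, $1_N\tt x'=1_N\tt x$, so the tracked average is unchanged as well.

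\textbf{Matching the views.} We check each item of $\mathcal I_h$ for every $h\in\mathcal H$:
\begin{itemize}
\item $A$ and $\beta$ are unchanged.
\item The states $\hat x_{{\rm a},h}$ and $\hat x_{{\rm a},s}$ are unchanged by the previous step.
\item $x_h$ and $\dot x_h$ are unchanged, since $h\neq i,l$.
\item Every $\eta_{hs}$ and $\eta_{sh}$ is unchanged, since only edge $(i,l)$ was modified and $h\notin\{i,l\}$.
\end{itemize}
Thus $\mathcal I_{\mathcal H}$ is identical under both executions while $x_i'\neq x_i$, so $x_i(t)$ is not uniquely identifiable. Because $\delta$ ranges over all of $\mathbb R\setminus\{0\}$, there is in fact a continuum of consistent trajectories.

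\textbf{Main obstacle.} The delicate point is that the perturbation must be absorbed by parties outside the coalition. Changing $x_i$ alone would break $1_N\tt m=0$, or would force a change to a mask computed from values that $\mathcal H$ knows. This is why we need a non-colluding neighbor $l$: the edge $(i,l)$ supplies hidden randomness $\eta_{il},\eta_{li}$ that can shift $m_i$, and the compensating shift in $x_l$ is also invisible because $l\notin\mathcal H$. We should also remark that if every neighbor of $i$ colludes, $m_i$ becomes fully known to $\mathcal H$ and this construction fails, which shows the hypothesis is essential. A further subtlety to address is that $\eta$ values are random draws, so "not identifiable" should be read as the existence of an equally admissible alternative realization. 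This holds because the $\eta$'s are drawn from a distribution supported on $\mathbb R$.
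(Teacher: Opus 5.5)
Your proposal is correct and follows essentially the same route as the paper. Both arguments pick a non-colluding neighbor $l$, perturb $(x_i,x_l)$ by $(+\delta,-\delta)$, and absorb the shift in the hidden pair $(\eta_{il},\eta_{li})$ so that $x_{\rm m}$, and hence every transmitted state and every item of $\mathcal I_{\mathcal H}$, is unchanged. The only differences are cosmetic: the paper puts the whole shift on $\eta_{il}$ while you split it as $\pm\delta/2$, and you check the items of $\mathcal I_h$ and the initial conditions more explicitly.
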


\begin{proof}
Let $x(t)=[x_1(t)\; x_2(t)\, \dotsc\, x_N(t)]\tt$, masks $m = [m_1\; m_2\, \dotsc\, m_N]\tt$, and masked references $x_{\rm m}(t)=x(t)+m$. 
For each neighbor $j\in\mathcal N_i$, the mask $m_i$ contains the edge terms $(\eta_{ji}-\eta_{ij})$. 
The coalition $\mathcal H$ knows exactly those $\eta_{hi}$ and $\eta_{ih}$  lying on edges incident to agents in $\mathcal H$. 
Thus, for the target $i\notin\mathcal H$, it can split

\[
m_i=\underbrace{\sum_{h\in\mathcal N_i\cap\mathcal H}(\eta_{hi}-\eta_{ih})}_{v_i\ \text{(known to }\mathcal H)}
\;+\;
\underbrace{\sum_{l\in\mathcal N_i\setminus\mathcal H}(\eta_{li}-\eta_{il})}_{u_i\ \text{(unknown to }\mathcal H)}.
\]
Hence, the coalition of honest-but-curious agents can reconstruct only
\[
x_{{\rm m},i}(t)=x_i(t)+m_i=\bigl(x_i(t)+u_i\bigr)+v_i \text{(known constant)},
\]
i.e., it observes the sum $x_i(t)+u_i$ (up to the known constant $v_i$), not $x_i(t)$ and $u_i$ separately.

Fix any $r\in\mathbb R$ and pick a legitimate neighbor
$l\in\mathcal N_i\setminus\mathcal H$.
Define a new execution by
\[
x'(t)=x(t)+r\,e_i-r\,e_l,\quad
m'=m-r\,e_i+r\,e_l.
\]
Here, $e_i$ and $e_l$ denote vectors with a $1$ in position $i$ (respectively, $l$) and $0$ elsewhere.

Realize $m'$ by changing only the unobserved pair on edge $(i,l)$. Set $\eta'_{il}=\eta_{il}+r$ and $\eta'_{l i}=\eta_{l i}$,
and keep \emph{all} $\eta$ values on edges incident to $\mathcal H$ unchanged.

Then
\[
m_i'=m_i-r,\quad m_l'=m_l+r,\quad m_j'=m_j\ (j\neq i,l),
\]
so the known part $v_i$ is unchanged while the unknown part becomes
$u_i' = u_i - r$.
Consequently,
\[
\begin{aligned}
x'_{{\rm m},i}(t)
&= x'_i(t)+u_i'+v_i
= \bigl(x_i(t)+r\bigr)+\bigl(u_i-r\bigr)+v_i \\[2pt]
&= x_i(t)+u_i+v_i
= x_{{\rm m},i}(t).
\end{aligned}
\]

This means that the same masked signal can be generated under $(x,m)$ and $(x',m')$, yet $x'_i(t)=x_i(t)+r$ differs by an arbitrary constant shift. Since the same masked signal is generated, the resulting transmitted signals produced by the updating law~\eqref{eq:4} are identical, and the estimates converge to a neighborhood of $\tfrac{1}{N}\mathbf{1}_N\tt x(t)$. Therefore, the set of honest–but–curious agents observing the information set $I_\mathcal H$ cannot tell the two cases apart.

Since $r$ is arbitrary, $x_i(t)$ is not uniquely identifiable from the honest-but-curious agent’s observations. any decomposition $(x_i(t)+u_i)=(x_i(t)+r)+(u_i-r)$ is observationally equivalent. This means that if at least one neighbor of $i$ is not an honest–but–curious agent (i.e., $\mathcal N_i\setminus\mathcal H\neq\varnothing$), then the true reference trajectory $x_i(t)$ of the legitimate node $i$ is not uniquely identifiable by the honest–but–curious nodes.
\end{proof}

\begin{corollary}
If the target node $i$ has only one neighbor or if all of its neighbors are honest-but-curious agents ($\mathcal N_i\subseteq\mathcal H$), then $u_i=0$ is known and the coalition can recover $x_i(t)$ exactly. Therefore,  privacy fails in these cases.
\end{corollary}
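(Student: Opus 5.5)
The plan is to reuse the decomposition from the proof of Theorem~\ref{hbc-privacy}, now in the degenerate case $\mathcal N_i\setminus\mathcal H=\varnothing$. Note that if $i$ has a single neighbor $j$ and that neighbor is honest-but-curious, we are already in the case $\mathcal N_i\subseteq\mathcal H$. If the single neighbor is legitimate, Theorem~\ref{hbc-privacy} still applies. So I will read the corollary as the case $\mathcal N_i\subseteq\mathcal H$, the one-neighbor phrase being the instance where that single neighbor colludes.

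\emph{Step 1: the mask $m_i$ is known to the coalition.} In the split $m_i=v_i+u_i$, the index set of $u_i$ is $\mathcal N_i\setminus\mathcal H=\varnothing$, so $u_i=0$. By Assumption~\ref{ass:encryption}, every pair $(\eta_{hi},\eta_{ih})$ with $h\in\mathcal N_i\cap\mathcal H$ is known to $h$. Since coalition members share their observations, $v_i=\sum_{h\in\mathcal N_i}(\eta_{hi}-\eta_{ih})=m_i$ is computable from $\mathcal I_{\mathcal H}$.

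\emph{Step 2: the coalition reconstructs $\dot x_i(t)$.} Pick any $h\in\mathcal N_i\subseteq\mathcal H$. Agent $h$ receives $\hat x_{{\rm a},i}(t)$ for all $t\ge 0$, and hence $\dot{\hat x}_{{\rm a},i}(t)$. Agent $i$'s own update law is
\[
\dot{\hat x}_{{\rm a},i}=\dot x_i-\beta\sum_{j\in\mathcal N_i}(\hat x_{{\rm a},i}-\hat x_{{\rm a},j}),
\]
and every $j\in\mathcal N_i$ lies in $\mathcal H$. Each $\hat x_{{\rm a},j}$ is therefore that agent's own state, available in the shared information. With $A$ and $\beta$ known, the coalition recovers
\[
\dot x_i(t)=\dot{\hat x}_{{\rm a},i}(t)+\beta\sum_{j\in\mathcal N_i}\bigl(\hat x_{{\rm a},i}(t)-\hat x_{{\rm a},j}(t)\bigr).
\]

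\emph{Step 3: integrate and unmask.} From \eqref{eq:4} we have $\hat x_{{\rm a},i}(0)=x_i(0)+m_i$, and by Step~1 $m_i$ is known, so $x_i(0)=\hat x_{{\rm a},i}(0)-m_i$. Then $x_i(t)=x_i(0)+\int_0^t\dot x_i(\tau)\,d\tau$ is uniquely determined.

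To tie this back to Theorem~\ref{hbc-privacy}, observe that its construction of an indistinguishable execution required an edge $(i,l)$ with $l\notin\mathcal H$. Without such an edge, any alternative $\eta'$ consistent with $\mathcal I_{\mathcal H}$ leaves $m_i$ unchanged, so no shift $r\neq 0$ is admissible. The main point needing care is Step~2, specifically that the coalition has all of $\hat x_{{\rm a},j}$ for $j\in\mathcal N_i$. This holds precisely because every such $j$ is in $\mathcal H$, which is the hypothesis of the corollary.
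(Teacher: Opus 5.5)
Your argument is correct and shares the paper's top-level idea: when $\mathcal N_i\subseteq\mathcal H$, the unknown part $u_i$ of the split $m_i=v_i+u_i$ is an empty sum. Hence $m_i=v_i$ is known to the coalition and the masked signal can be unmasked.

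\textbf{Where you differ from the paper.} The difference is in how the masked signal is obtained. The paper gives no separate argument for the corollary. It inherits from the proof of Theorem~\ref{hbc-privacy} the assertion that the coalition reconstructs $x_{{\rm m},i}(t)$, which it supports only by appeal to the observer-based attack of \cite{zhang2022privacy}, stated under boundedness of $x_i,\dot x_i,\ddot x_i$. It then subtracts $v_i$.

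You instead invert agent $i$'s update law exactly. This works precisely because every neighbor estimate $\hat x_{{\rm a},j}$, $j\in\mathcal N_i$, belongs to a coalition member. You then anchor the result with the known initialization $\hat x_{{\rm a},i}(0)=x_i(0)+m_i$.

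\textbf{What each route buys.} Your route makes the corollary self-contained. It gives exact recovery with no regularity assumptions beyond those needed for \eqref{eq:4} to make sense. The trade-off is that your inversion uses $\mathcal N_i\subseteq\mathcal H$ essentially, so it says nothing about the partial-information setting of Theorem~\ref{hbc-privacy}.

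\textbf{A simplification.} You can merge Steps~2 and~3, and avoid differentiating an observed signal, by integrating \eqref{eq:4} directly:
\[
x_i(t)=\hat x_{{\rm a},i}(t)-m_i+\beta\int_0^t\sum_{j\in\mathcal N_i}\bigl(\hat x_{{\rm a},i}(\tau)-\hat x_{{\rm a},j}(\tau)\bigr)\,d\tau .
\]

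\textbf{The one-neighbor clause.} Your reading is the right one. A single \emph{legitimate} neighbor gives $\mathcal N_i\setminus\mathcal H\neq\varnothing$, so Theorem~\ref{hbc-privacy} applies and privacy holds. The clause is valid only as the special case of $\mathcal N_i\subseteq\mathcal H$ in which that lone neighbor colludes.
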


\section{Simulation Results}\label{sec:Simulation}

We validate the proposed privacy-preserving DAC algorithm through the simulation results presented in this section.

Consider a connected, undirected ring network with six agents as shown in Figure.~\ref{fig:fig1}.
\begin{figure}[!t]
\centering
\includegraphics[width=0.4\linewidth]{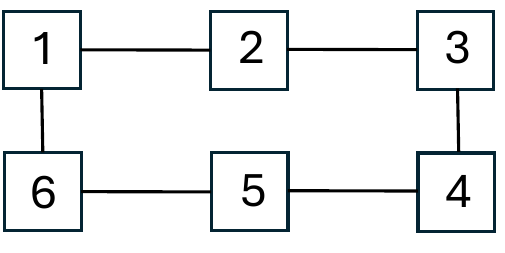}
\caption{The communication topology}
\label{fig:fig1}
\end{figure}
For each agent \(i\in\{1, 2,\dots,6\}\), the reference signals \(x_i(t)\) are
\begingroup\setlength{\jot}{1pt} 
\[
\begin{alignedat}{2}
x_1(t)&=-7\sin(0.5t-\tfrac{2\pi}{3}),\; &\; x_2(t)&=-6.5\sin(0.75t-\tfrac{\pi}{3}),\\[-1pt]
x_3(t)&=-6\sin t,\;                      &\; x_4(t)&=-5.5\cos(1.25t+\tfrac{\pi}{3}),\\[-1pt]
x_5(t)&=-5\cos(1.5t+\tfrac{2\pi}{3}),\;  &\; x_6(t)&=-4.5\cos(1.75t+\pi).
\end{alignedat}
\]
\endgroup
All angles are in radians.

To proceed with Algorithm~1, each agent randomly generates mutually independent real random values $\eta_{ij}$ for its neighbors, collected in the matrix
\[
\phi =
\begin{bmatrix}
0 & 0.20 & 0 & 0 & 0 & -0.40 \\
6.75 & 0 & 1.10 & 0 & 0 & 0 \\
0 & -0.60 & 0 & 0.80 & 0 & 0 \\
0 & 0 & -10.25 & 0 & 0.50 & 0 \\
0 & 0 & 0 & -0.75 & 0 & 1.30 \\
7.90 & 0 & 0 & 0 & -3.20 & 0
\end{bmatrix}.
\]
Note that the \(i\)th row, \(j\)th column entry of matrix gives the random number.

The calculated mask for all agents are
\[
m=\bigl[\,14.85\;-8.25\;-9.35\;9.80\;-3.25\;-3.80\,\bigr]\tt.
\]

We set the initial conditions of the estimators as \( \hat{x}_{{\rm a},i}(0) = x_i(0) +m_i \) and the design parameter is specified as $\beta = 300$.

We apply Algorithm~1 with the specified parameters. Figure~\ref{fig:fig2} illustrates the dynamic evolution of the estimator states \(\hat{x}_{{\rm a},i}(t)\). The results show that each agent’s estimate closely tracks the true average \(x_{\rm a}(t)\), achieving rapid convergence and high accuracy.
\begin{figure}[!t]
\centering
\includegraphics[width=3.6in]{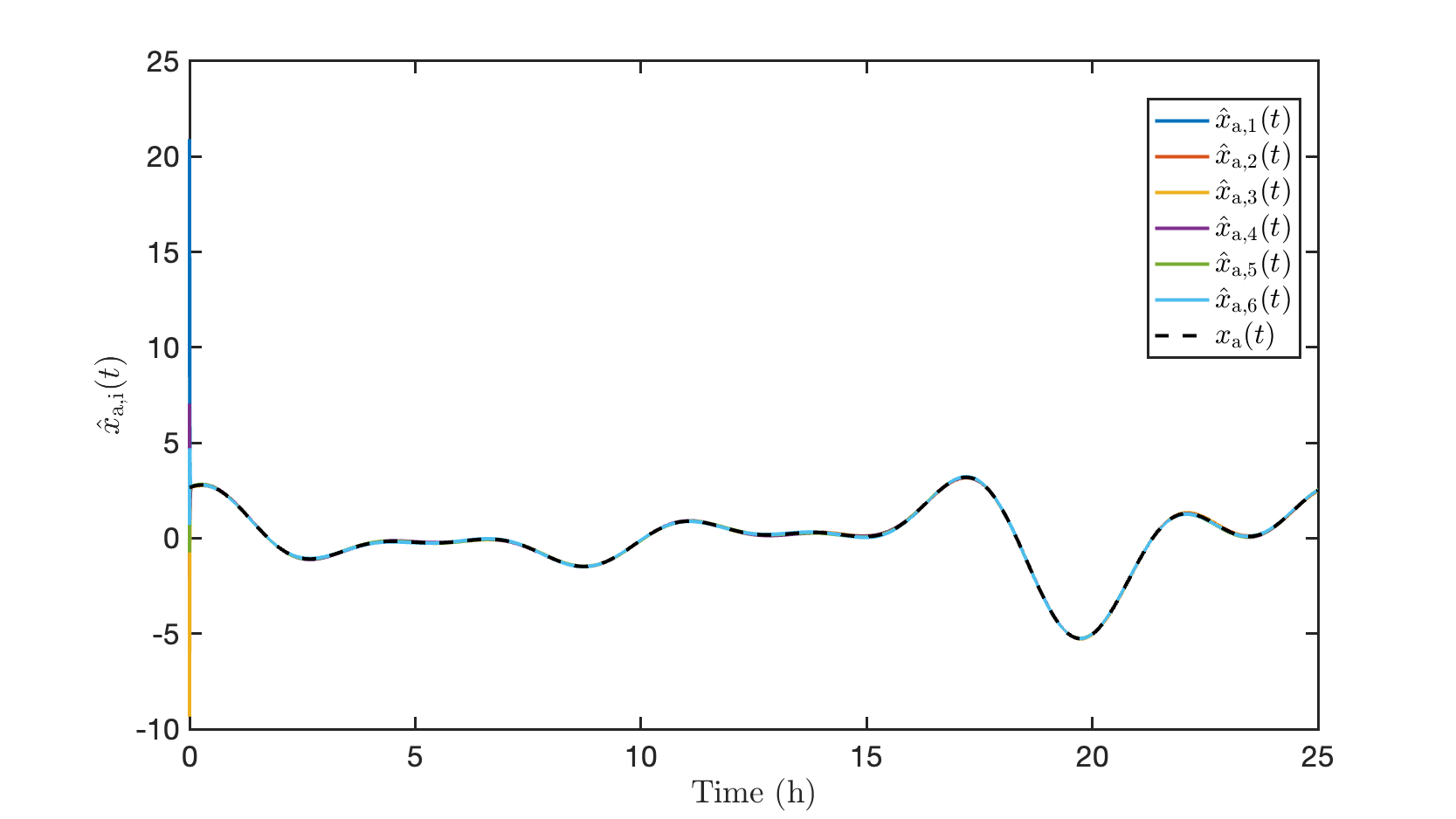}
\caption{The estimator state $\hat{x}_{{\rm a},i}$ convergence.}
\label{fig:fig2}
\end{figure}

We demonstrate that the proposed privacy-preserving DAC algorithm is resilient to external eavesdroppers. By Theorem~\ref{ext-privacy}, each agent’s reference signal \(x_i(t)\) remains protected from any external observer. To empirically validate this result, we implement the attacker observer model in~\cite{zhang2022privacy}, which attempts to reconstruct the true references \(x_i(t)\) from shared messages. 

We simulate a scenario in which an external eavesdropper attempts to infer each agent’s reference signal $x_i(t)$. As shown in Figure.~\ref{fig:fig3}, the attacker recovers only the masked reference \(x_{{\rm m},i}(t)=x_i(t)+m_i\). (Its estimate \(\hat{x}_{\rm e,i}(t)\) of the true reference signal does not match \(x_i(t)\).)
This confirms that, without knowledge of the private mask \(m_i\), an external eavesdropper cannot reconstruct the exact reference signal $x_i(t)$.
\begin{figure}[!t]
\centering
\includegraphics[width=3.6in]{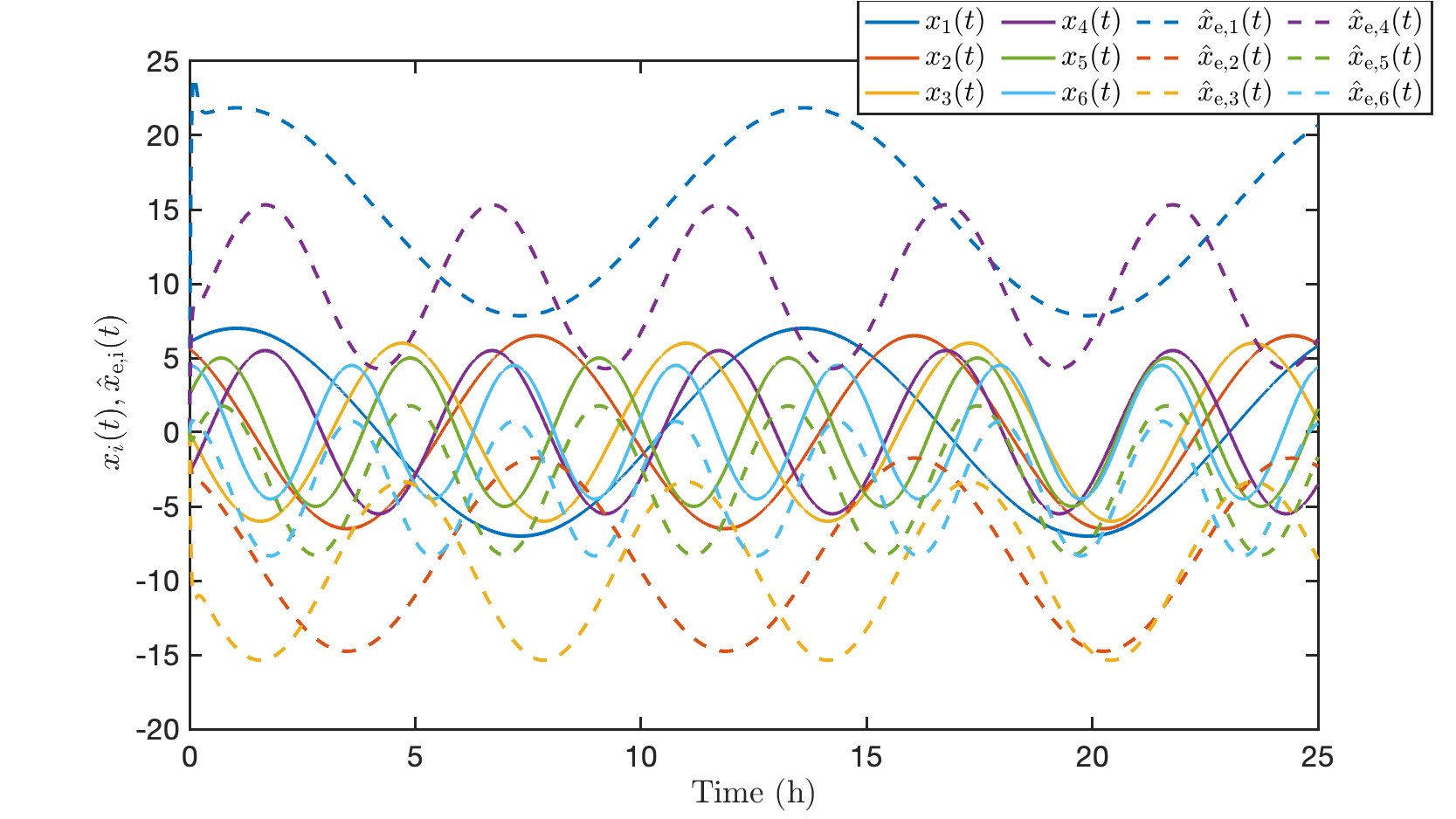}
\caption{Eavesdropper reconstruction of $x_i(t)$ ($\hat{x}_{\rm e,i}$) vs true reference signal $x_i(t)$ under the proposed algorithm. }
\label{fig:fig3}
\end{figure}

Next, we show that our algorithm preserves privacy against internal honest-but-curious agents. Consider agent~2 as an honest-but-curious agent attempting to infer agent~1’s reference \(x_1(t)\). Agent~6 is the other neighbor of agent~1 and behaves legitimately. By Theorem~\ref{hbc-privacy}, the honest-but-curious agent's coalition \(\mathcal H\) knows \(\eta_{21}\) and \(\eta_{12}\) but not \(\eta_{61}\) or \(\eta_{16}\), so
\[
m_1=\underbrace{(\eta_{21}-\eta_{12})}_{v_1\ \text{(known to }\mathcal H)}
\;+\;
\underbrace{(\eta_{61}-\eta_{16})}_{u_1\ \text{(unknown to }\mathcal H)}.
\]
Applying the attacker model in~\cite{zhang2022privacy}, agent~2 can obtain the masked signal
$
x_{{\rm m},1}(t)=x_1(t)+m_1=(x_1(t)+u_1)+v_1.$
With \(v_1\) known and \(u_1\) unknown, the best it can recover is
\[
\begin{aligned}
x_{{\rm m},1}(t)-v_1 &= x_1(t)+u_1
= x_1(t) +14.85- (6.75-0.2),\\
&= x_1(t) +8.3,
\end{aligned}
\]
which is offset by the unknown constant \(u_1=8.3\) and thus not the true reference \(x_1(t)\). Figure~\ref{fig:fig4} compares the honest-but-curious agent’s estimate \(\hat{x}_{\rm h,1}(t)\) with the true reference \(x_1(t)\), showing that the honest-but-curious agent cannot recover the true reference \(x_1(t)\).
\begin{figure}[!t]
\centering
\includegraphics[width=3.6in]{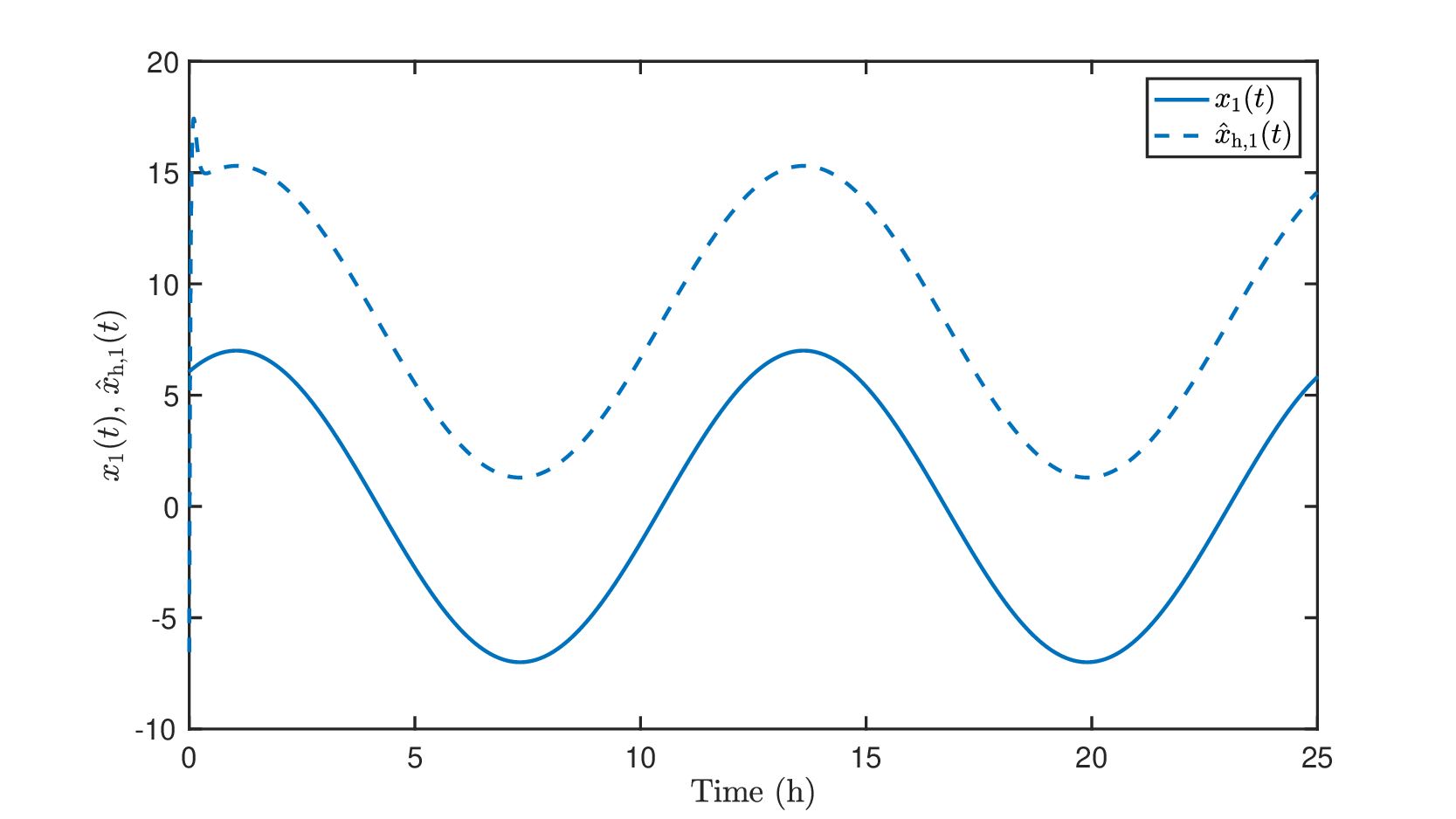}
\caption{An Honest-but-curious agent's (agent 2)  reconstruction of $x_1(t)$ ($\hat{x}_{h,1}(t)$) vs true reference signal $x_i(t)$ under the proposed algorithm. }
\label{fig:fig4}
\end{figure}

We now show that the proposed privacy-preserving algorithm preserves the convergence rate of the conventional algorithm and outperforms the state-decomposition method. Figure~\ref{fig:fig5} depicts the evolution of the consensus error $\|e(t)\|_2$, where $e_i(t)=\bigl|\hat{x}_{{\rm a},i}(t)-x_{\rm a}\bigr|$. The proposed algorithm matches the conventional algorithm’s rate while converging substantially faster than the state-decomposition baseline.
\begin{figure}[!t]
\centering
\includegraphics[width=3.6in]{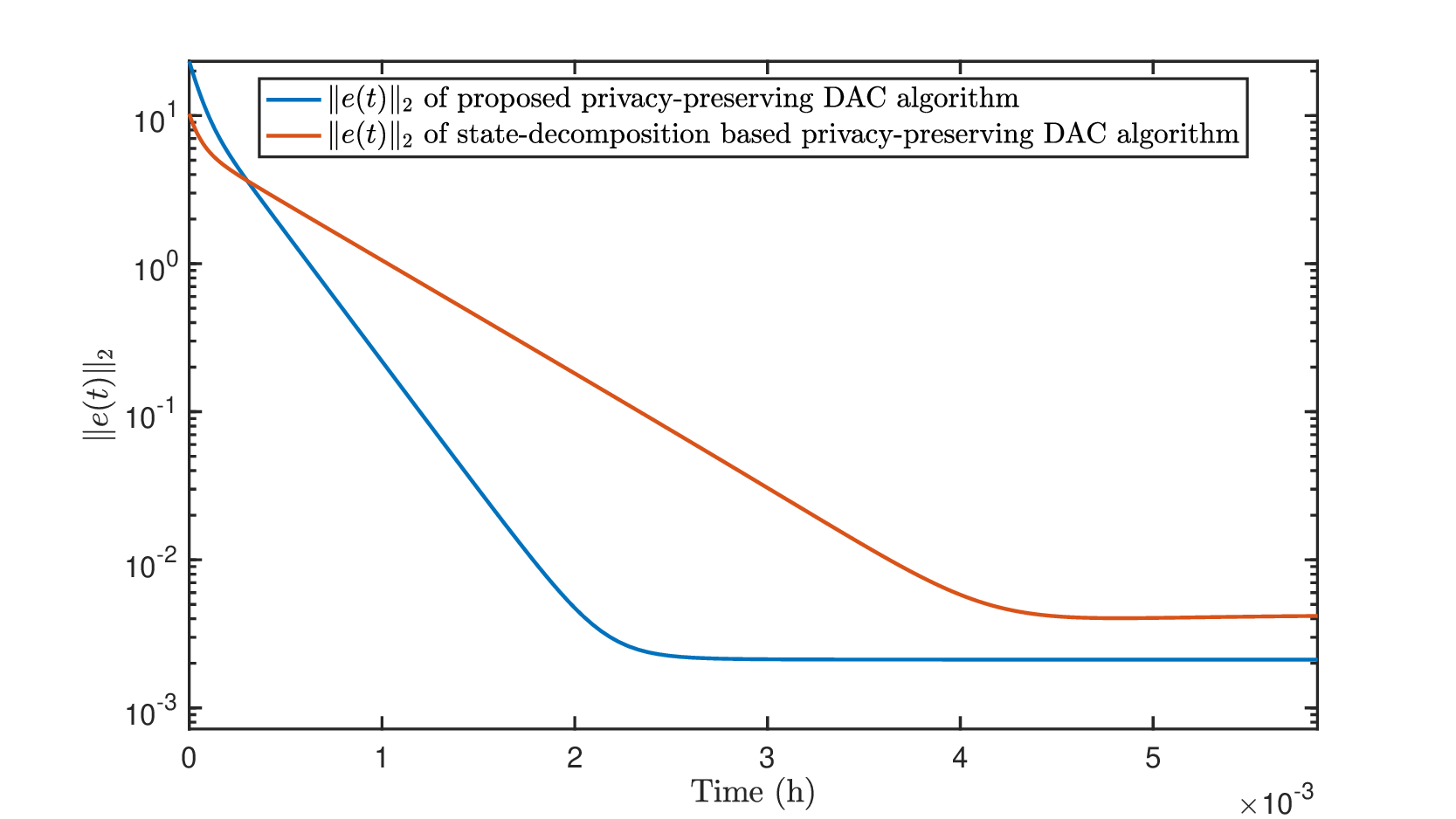}
\caption{$\ell_2$-norm of the consensus error for the proposed privacy-preserving DAC vs the state-decomposition-based DAC.}

\label{fig:fig5}
\end{figure}

\section{Conclusion}\label{sec:Conclusion}
This paper proposed a novel privacy-preserving DAC algorithm with masked reference signals. We proved that the algorithm's tracking error converges to the same neighborhood as the conventional dynamic average consensus algorithm with the same convergence rate while preserving privacy. We proved that the algorithm is able to protect each agent’s reference signal from disclosure to external eavesdroppers. We also showed that an honest-but-curious agent cannot infer a neighboring node’s reference signal when the target node has at least one legitimate neighbor. Finally, simulation results demonstrated the theoretical results.

\bibliographystyle{IEEEtran}
\bibliography{reference}

\end{document}